\newtheorem{myTheor}{Theorem}
\newcounter{MYtempeqncnt}
\newcommand{\erfc}{\mathrm{erfc}}
\def\BibTeX{{\rm B\kern-.05em{\sc i\kern-.025em b}\kern-.08em
    T\kern-.1667em\lower.7ex\hbox{E}\kern-.125emX}}
\begin{document}

\title{Lower Bound on the Error Rate of Genie-Aided Lattice Decoding
\thanks{This work was supported by JSPS Kakenhi Grant Number JP 19H02137 and 21H04873.}
}

\author{\IEEEauthorblockN{Jiajie Xue and Brian M.~Kurkoski}
\IEEEauthorblockA{Japan Advanced Institute of Science and Technology\\
1-1 Asahadai, Nomi, Ishikawa, Japan \\
\{xue.jiajie, kurkoski\}@jaist.ac.jp}
}

\maketitle

\begin{abstract}
A genie-aided decoder for finite dimensional lattice codes is considered. The decoder may exhaustively search through all possible scaling factors $\alpha \in \mathbb{R}$.  We show that this decoder can achieve lower word error rate (WER) than the one-shot decoder using $\alpha_{MMSE}$ as a scaling factor. 
A lower bound on the WER for the decoder is found by considering the covering sphere of the lattice Voronoi region. The proposed decoder and the bound are valid for both power-constrained lattice codes and  lattices.
If the genie is applied at the decoder, E8 lattice code has 0.5 dB gain and BW16 lattice code has 0.4 dB gain at WER of $10^{-4}$ compared with the one-shot decoder using $\alpha_{MMSE}$. A method for estimating the WER of the  decoder is provided by considering the effective sphere of the lattice Voronoi region, which shows an accurate estimate for E8 and BW16 lattice codes. In the case of per-dimension power $P \rightarrow \infty$, an asymptotic expression of the bound is given in a closed form. A practical implementation of a simplified decoder is given by considering CRC-embedded $n=128$ polar code lattice. 
\end{abstract}


\section{Introduction}

An $n$-dimensional lattice is a discrete additive subgroup of real number space $\mathbb{R}^n$. Lattices obtain coding gain by providing Euclidean distance between codewords; lattices share the same algebra as physical channels. By those properties, lattices can be considered as a type of channel coding scheme. Lattices are infinite constellation and are not suitable for power-constrained wireless communications. By constructing nested lattice codes $\Lambda_c/ \Lambda_s$ consisting of a coding lattice $\Lambda_c$ and a shaping lattice $\Lambda_s$, lattice codes can satisfy power constraints for wireless communications. Lattice codes can achieve additive white Gaussian noise (AWGN) channel capacity by applying minimum-distance decoding with respect to the lattice code \cite{urbanke1998lattice}. More significantly, Erez and Zamir proved that the channel capacity can also be achieved by instead applying lattice decoding with an minimum mean square error (MMSE) scaling factor $\alpha_{MMSE}$ \cite{erez2004achieving}.
Practical designs of lattice codes are also being studied. Construction D/D' lattices use binary codes, such as polar codes \cite{liu2018construction} and LDPC codes \cite{zhou2022construction}, which show excellent WER performance on the power-constrained AWGN channel. Nested lattice codes are also component codes for compute-and-forward relaying \cite{nazer2011compute}, which can provide high network throughput by applying network coding.

Lattice decoding applying $\alpha_{MMSE}$ is optimal when $n \rightarrow \infty$ \cite{erez2004achieving}. However, $\alpha_{MMSE}$ may not be optimal for finite dimensional lattice decoding. In this paper, we propose a decoder for finite dimensional lattice codes which performs a genie-aided exhaustive search through all possible scaling factors $\alpha \in \mathbb{R}$. The genie informs the decoder whether the decoding result is correct or not. If the genie says the current decoding result is not correct, then the decoder will choose a different $\alpha$ and re-try decoding. 
Because all $\alpha \in \mathbb{R}$ can be used for decoding, the decoder is expected to achieve a lower WER than one-shot decoding using only $\alpha_{MMSE}$. 

As the main result of this paper, for $n$-dimensional lattice codes, we derive a lower bound on the WER for the proposed decoder. The decoder and the bound are valid for both power-constrained lattice codes and  lattices. 
The lower bound on WER is found by considering the \emph{covering sphere} of the lattice Voronoi region and an estimate of the WER is found by considering the \emph{effective sphere} of the lattice Voronoi region. The estimate is accurate for E8 and BW16 lattice codes which have sphere-like Voronoi regions. An asymptotic expression for the lower bound as the per-dimensional power $P$, or equivalently code rate $R$, goes to infinity is given in closed form. 
A comparison among WER obtained by the error bound, the genie-aided exhaustive search lattice decoder and conventional lattice decoders are provided using E8 and BW16 lattice codes.
As a result, if the genie is applied at the decoder, the E8 lattice code has 0.5 dB gain and the BW16 lattice code has 0.4 dB gain at WER of $10^{-4}$, compared with the one-shot decoder using $\alpha_{MMSE}$. For practical implementation, the genie can be implemented by error detection using a cyclic redundancy check (CRC) code. Numerical evaluation using a CRC-embedded polar code lattice with $n=128$ is provided where the decoder uses three $\alpha$ candidates instead of exhaustive search. About 0.1 dB gain at WER of $10^{-4}$ can be obtained if the decoder is allowed to re-try decoding.

For the fonts of variables in this paper, scalars are denoted as \emph{italic}, e.g.\ $r$, vectors are denoted as lower-case bold, e.g.\ $\mathbf{g}$, and matrices using upper-case bold, e.g.\ $\mathbf{G}$. We use $n$ for the lattice dimension. The meaning of $n$ keeps when it appears at subscript or superscript, such as the $n$-dimensional real number space $\mathbb{R}^n$. 

\section{Preliminaries}

This section gives definitions related to lattices codes and encoding/decoding scheme. The universal lower bound on error probability for lattice codes is also introduced.

\subsection{Lattices and lattice codes} \label{sec_II_A}

Let $\mathbf{g}_1, \mathbf{g}_2,\ldots, \mathbf{g}_n \in \mathbb{R}^n$ be $n$ linearly independent column vectors. A $n$-dimensional lattice $\Lambda$ can be formed by:
\begin{equation*}
    \Lambda= \{ \mathbf{Gb} : \mathbf b \in \mathbb Z^n \}
\end{equation*}
where $\mathbf{G}= [\mathbf{g}_1, \mathbf{g}_2, ... \mathbf{g}_n] \in \mathbb{R}^{n\times n}$ is a generator matrix. 
The lattice quantizer $Q_{\Lambda}(\mathbf{y})$ for $\Lambda$ finds the nearest lattice point $\mathbf{x} \in \Lambda$ for arbitrary $\mathbf y \in \mathbb{R}^n$ and can be expressed as:
\begin{equation*}
    Q_{\Lambda}(\mathbf{y})= \mathop{\arg\min}_{\mathbf x \in \Lambda} \|\mathbf{y}- \mathbf{x}\|^2.
\end{equation*}
Using the lattice quantizer $Q_{\Lambda}(\mathbf{y})$, the modulo operation for lattice $\Lambda$ can be defined by:
\begin{equation*}
    \mathbf{y} \bmod \Lambda = \mathbf{y}- Q_{\Lambda}(\mathbf{y}).
\end{equation*}

The Voronoi region $\mathcal{V}$ of lattice point $\mathbf{x} \in \Lambda$ is the set of $\mathbf y$ such that $\mathbf{y}$ is closer to $\mathbf{x}$ than any other lattice point. The volume of the Voronoi region of the lattice is $V_n= |\det(\mathbf{G})|$. 
The \emph{covering sphere} $\mathcal{S}_c$ with radius $r_c$ is the sphere of minimal radius that can cover the whole Voronoi region $\mathcal{V}$, i.e.\ $\mathcal{V} \subseteq \mathcal{S}_c$. The \emph{effective sphere} $\mathcal{S}_e$ with radius $r_e$ is the sphere with volume $V(\mathcal{S}_e)$ which is equal to the volume of the Voronoi region $V_n$, i.e.\ $V(\mathcal{S}_e) = V_n$. The volume of a sphere $\mathcal{S}$ in $n$-dimensional space with radius $r$ is $V(\mathcal{S})= \frac{\pi^{n/ 2} r^n}{\Gamma(\frac{n}{2}+ 1)}$, where $\Gamma(\cdot)$ is gamma function. The relationship among $\mathcal{V}$, $\mathcal{S}_c$ and $\mathcal{S}_e$ with $n=2$ is shown in Fig.~\ref{fig_eff_cov_sphere}. More details related to covering sphere and effective sphere can be found in \cite[Sec.\ 2]{erez2005lattices}.

A nested lattice code can be used as power-constrained lattice code for  wireless communications. 
Let two lattices $\Lambda_c$ and $\Lambda_s$ satisfy $\Lambda_s \subseteq \Lambda_c$ and form a quotient group $\Lambda_c / \Lambda_s$. A nested lattice code $\mathcal{C}$ is the set of coset leaders of the quotient group $\Lambda_c / \Lambda_s$ or equivalently $\mathcal{C}= \{\boldsymbol{x} \bmod \Lambda_s:\boldsymbol{x} \in \Lambda_c\}$.
The fine lattice $\Lambda_c$ is called the coding lattice while coarse lattice $\Lambda_s$ is called the shaping lattice.

\subsection{Encoding/decoding scheme}

The encoding/decoding scheme used in this paper is shown in Fig.~\ref{fig_channel_model_awgn} and real number transmission is assumed. Let an $n$-dimensional nested lattice code be $\mathcal{C}= \Lambda_c/\Lambda_s$ and encode the $n$-dimensional message $\mathbf{b}$ into the coded message $\mathbf{x} \in \mathcal{C}$. $\mathbf{x}$ is transmitted over the AWGN channel. The channel output is $\mathbf{y}= \mathbf{x}+ \mathbf{z}$, where $\mathbf{z} \sim \mathcal{N}(0, \sigma^2 \mathbf I_n)$ is the i.i.d.\ Gaussian noise and $\mathbf I_n$ is the identity matrix. The probability density function of $n$-dimension Gaussian noise $\mathbf{z}$ is given by $g_n(\mathbf{z)}$:
\begin{equation*}
    g_n(\mathbf{z})= \frac{1}{(2 \pi \sigma^2)^{n/2}} e^{-\|\mathbf{z}\|^2/2 \sigma^2}
\end{equation*}
The average power the codebook $\mathcal{C}$ is denoted as $P$. The signal-to-noise ratio (SNR) for the AWGN channel is $SNR= P/\sigma^2$. At the receiver, the decoder first scales the channel output by $\alpha \in \mathbb{R}$ then uses the lattice quantizer and modulo to estimate the original information $\hat{\mathbf{b}}$. The decoding succeeds if $\alpha \mathbf{y}$ falls into the Voronoi region of $\mathbf{x}$. The MMSE scaling factor $\alpha_{MMSE}$ introduced in \cite{erez2004achieving} is considered as the asymptotically optimal factor and is $\alpha_{MMSE}= \frac{SNR}{1+ SNR}$.
For convenience, we will denote the whole decoding process as $\hat{\mathbf{b}}= Dec(\alpha \mathbf{y)}$.

\begin{figure}[t]
    \centering
    \includegraphics[scale=0.4]{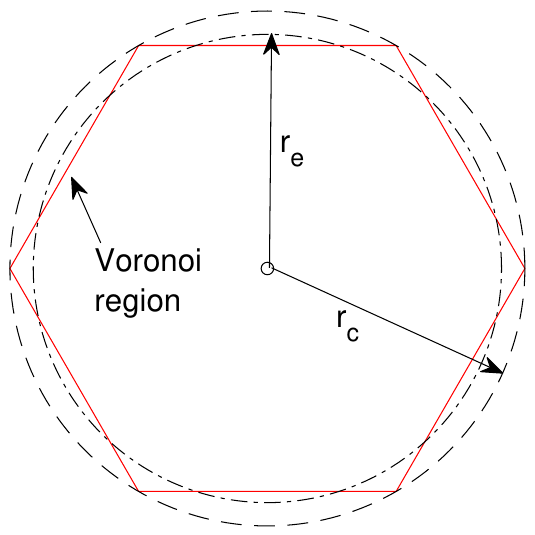}
    \caption{Relationship among lattice Voronoi region, covering sphere and effective sphere.}
    \label{fig_eff_cov_sphere}
\end{figure}

\begin{figure}[t]
    \centering
    \includegraphics[scale=0.25]{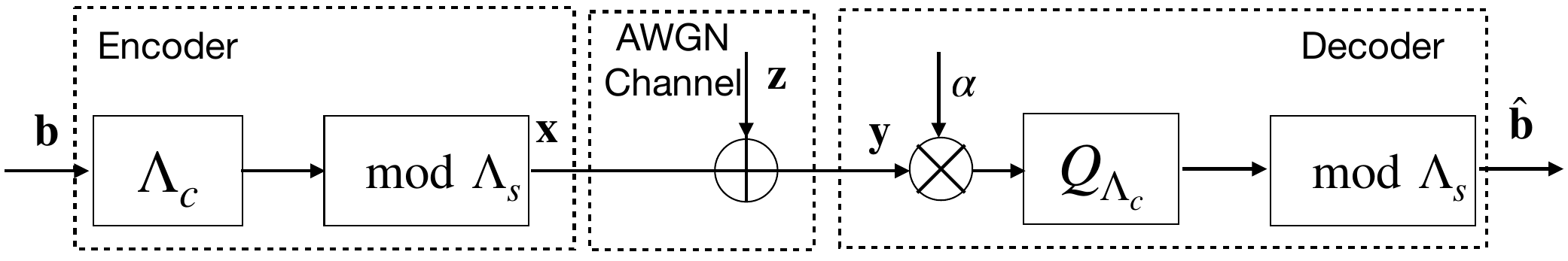}
    \caption{Encoding/decoding scheme and channel model.}
    \label{fig_channel_model_awgn}
\end{figure}

\subsection{Universal error bound for lattice codes}

A universal bound on decoding error probability for lattices using a sphere in $n$-dimensional space was given in \cite{tarokh1999universal}. Given effective radius $r_e$ and noise variance $\sigma^2$, the error probability of any $n$-dimensional lattice is lower bounded by:
\begin{align} \label{equ_universal_even}
    P_e \geq e^{-t} \left(1+ \frac{t}{1!}+ \dots+  \frac{t^{n/2- 1}}{(n/2- 1)!}\right)
\end{align}
for $n$ even; while for odd $n$:
\begin{align} \label{equ_universal_odd}
    P_e \geq& \erfc(t^{\frac{1}{2}})\\ \notag
    & + e^{-t} \left(\frac{t^{1/2}}{(1/2)!}+ \frac{t^{3/2}}{(3/2)!}+ \dots+  \frac{t^{n/2- 1}}{(n/2- 1)!}\right)
\end{align}
where $t= r_e^2/ 2 \sigma^2$. 
The bound on probability of correct decoding $P_c= 1- P_e$ is the probability of $n$-dimensional Gaussian noise falls into the effective sphere, which can be computed by closed form using \eqref{equ_universal_even} and \eqref{equ_universal_odd}. 

\section{Lower bound on error rate}

In this section, a genie-aided exhaustive search decoder is proposed. The lower bound on the WER for the decoder is derived by considering the covering sphere. An estimate on the achievable WER is found by considering the effective sphere. The asymptotic expression of the lower bound is given for the case of per-dimension power $P \rightarrow \infty$.

\subsection{Decoding strategy}

Instead of only use $\alpha_{MMSE}$ at decoding, a genie-aided exhaustive search decoder is proposed to achieve lower WER. 
The genie-aided decoder attempts to exhaustively search through all $\alpha \in \mathbb{R}$ to find some valid $\alpha$ that can correctly decode the received message. 
Given received message $\mathbf{y} \in \mathbb{R}^n$ and some $\alpha \in \mathbb{R}$, the decoder will  decode  $\hat{\mathbf{b}}= Dec(\alpha \mathbf{y)}$. Then the genie will check whether $\hat{\mathbf{b}}= \mathbf{b}$. If $\hat{\mathbf{b}} = \mathbf{b}$, the decoder will output $\hat{\mathbf{b}}$ as the decoding result. If $\hat{\mathbf{b}} \neq \mathbf{b}$, the decoder will select different $\alpha$ and re-try decoding. For correctly decodable $\mathbf{y}$, decoder can always find some $\alpha^*$, by which $\hat{\mathbf{b}}^*= Dec(\alpha^* \mathbf{y})= \mathbf{b}$; otherwise such $\alpha$ doesn't exist and decoding fails. Since the search range of $\alpha$ is the whole real number space $\mathbb{R}$ rather than one single value $\alpha_{MMSE}$, the proposed decoder is expected to achieve lower WER than the one-shot decoder.

Based upon the construction of the decoder, the received message $\mathbf{y}$ is correctly decodable if and only if a line connecting $\mathbf{y}$ and the original point $\mathbf{0}$ passes through the Voronoi region of $\mathbf{x}$. Furthermore, we can define the decodable region as the set of $\mathbf{y}$ such that there exists $\alpha \in \mathbb{R}$ and $\alpha \mathbf{y} \in \mathcal{V}$. Since the decodable region for transmitted message $\mathbf{x}$ depends on the message itself, the decodable region formed by the Voronoi region can be denoted using $\mathcal{D}(\mathbf{x})$:
\begin{align*} \label{equ_def_dec_region}
    \mathcal{D}(\mathbf{x})= \{\mathbf{y} \mid \exists\alpha \in \mathbb{R},\ \alpha \mathbf{y} \in \mathcal{V}(\mathbf{x})\}
\end{align*}
where $\mathcal{V}(\mathbf{x})$ is Voronoi region of given $\mathbf{x}$. The decodable region of transmitted message $\mathbf{x}$ forms an $n$-dimensional cone-like region. 
Fig.~\ref{fig_possible_region_A2} shows an example of the cone-like decodable region using A2 lattice. Given transmitted message $\mathbf{x}$ and the corresponding Voronoi region $\mathcal{V}(\mathbf{x})$, we can draw two lines connecting $\mathbf{0}$ and the corners of $\mathcal{V}(\mathbf{x})$ to form the decodable region $\mathcal{D}(\mathbf{x})$ by which the Voronoi region $\mathcal{V}(\mathbf{x})$ should be completely covered. By the example of Fig.~\ref{fig_possible_region_A2}, the received message is correctly decodable if $\mathbf{y} \in \mathcal{D}(\mathbf{x})$, such as $\mathbf{y}_1$, because there exists some $\alpha$ that the scaled received message $\alpha \mathbf{y}$ falls into $\mathcal{V}(\mathbf{x})$; while for $\mathbf{y}_2$, such $\alpha$ does not exist. 


\begin{figure}[t]
    \centering
    \includegraphics[scale=0.5]{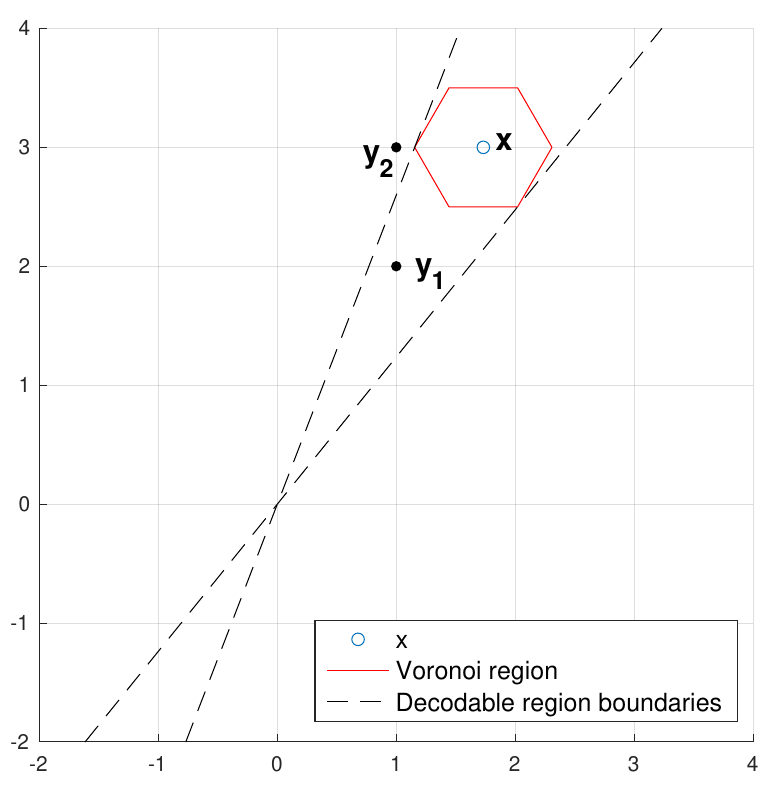}
    \caption{Example of decodable region $\mathcal{D}(\mathbf{x})$ using A2 lattice and $\mathbf{x}= (\sqrt{3}, 3)$.}
    \label{fig_possible_region_A2}
\end{figure}

\subsection{Lower bound on error rate using covering sphere} \label{sec_III_2}

Now we give a lower bound on the WER for the genie-aided exhaustive search decoder by computing the probability that $\mathbf{y}$ falls into a superset of the decodable region $\mathcal{D}(\mathbf{x})$. The shape of $\mathcal{D}(\mathbf{x})$ depends on $\mathbf{x}$ and is hard to find in general. Instead of forming $\mathcal{D}(\mathbf{x})$ using $\mathcal V(\mathbf{x})$, we define a new decodable region $\mathcal{D}_c$ using the covering sphere $\mathcal{S}_c$ centered at $\mathbf{x}$ as $\mathcal{D}_c= \{\mathbf{y} \mid \exists\alpha \in \mathbb{R},\ \alpha \mathbf{y} \in \mathcal{S}_c\}$. Given lattice codes $\mathcal{C}$ and the message power $\|\mathbf{x}\|^2$, the $\mathcal{D}_c$ only depends on $P_{\mathbf x}$ 
and $r_c$ which are constant, where $P_{\mathbf x}= \|\mathbf{x}\|^2/n$.

Let $P_{e,Dec}$ be the probability of word error for given $\mathbf{x}$:
\begin{equation} \label{equ_errorP_dec}
    P_{e,Dec}= 1- \rm{Prob}(\mathbf{y} \in \mathcal{D}(\mathbf{x}))
\end{equation}
where $\mathbf{y}= \mathbf{x}+ \mathbf{z}$. Similarly, we can define $P_{e,cover}$ when $\mathcal{S}_c$ and $\mathcal{D}_c$ are considered:
\begin{equation} \label{equ_errorP_cover}
    P_{e,cover}= 1- \rm{Prob}(\mathbf{y} \in \mathcal{D}_c)
\end{equation}
For finite dimensional lattice codes, the Voronoi region is not a sphere and $\mathcal{V} \subset \mathcal{S}_c$ holds. Therefore $\mathcal{D}(\mathbf{x}) \subset \mathcal{D}_c$ also holds for any $\mathbf{x} \in \mathcal{C}$. As a result, the WER of the decoder is lower bounded by $P_{e,cover}$:
\begin{equation} \label{equ_lower_bound_def}
    P_{e,Dec} > P_{e,cover}= 1- \rm{Prob}(\mathbf{y} \in \mathcal{D}_c)
\end{equation}
The inequality in \eqref{equ_lower_bound_def} is strict for finite dimensional lattice codes and becomes increasingly tight as the Voronoi region becomes sphere-like.

By obtaining an exact expression for $P_{e,cover}$, we can give a lower bound on the WER for the genie-aided exhaustive search decoder by using \eqref{equ_lower_bound_def}, in as Theorem~\ref{them_1} below. Note that the decoder of Erez and Zamir's uses a lattice decoder which only finds the closest lattice point while ignoring the boundary of the codebook. The proposed lattice decoder and its lower bound are valid for both power-constrained lattice codes $\mathcal{C}$ and lattices $\Lambda$. Without loss of generality, Theorem~\ref{them_1} is given by considering $\Lambda$.
\begin{myTheor} \label{them_1}
\rm Let non-zero $\mathbf x$ be a lattice point of an $n \geq 2$ dimensional lattice $\Lambda$ having covering radius $r_c$ and per-dimensional power $P_{\mathbf{x}}= \|\mathbf{x}\|^2/n$. With the restriction of $r_c^2 < n P_{\mathbf x}$, the probability of word error for the genie-aided decoder on the AWGN channel with noise variance $\sigma^2$ is lower bounded by:
\begin{align} \label{equ_sphere_bound}
    P_{e,Dec} > 1- \int_{-\infty}^{\infty} \frac{1}{\sqrt{2 \pi \sigma^ 2}} e^{- \frac{z^2}{2 \sigma^2}} (1 - h(z)) dz
\end{align}
where if $n$ is odd:
\begin{align*}
    h(z) =  e^{- t}\left(\sum_{k= 0}^{(n- 3)/ 2} \frac{t^k}{k!}\right)
\end{align*}
and if $n$ is even:
\begin{align*}
    h(z) =  \erfc(t^ {1/2}) + e^{- t}\left(\sum_{k= 1}^{(n- 2)/ 2} \frac{t^{k- 1/2}}{(k- 1/ 2)!}\right)
\end{align*}
with $t= f^2(z) / (2 \sigma^2)$ and $f(z)= \left| \frac{r_c}{\sqrt{n P_{\mathbf x} - r_c^ 2}} z + \sqrt{\frac{n P_{\mathbf x} r_c^ 2}{n P_{\mathbf x} - r_c^ 2}} \right|$. 
\end{myTheor}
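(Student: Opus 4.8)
The plan is to evaluate the cover probability $P_{e,cover}$ in closed form and then appeal to the containment $\mathcal{D}(\mathbf{x}) \subset \mathcal{D}_c$ already recorded in \eqref{equ_lower_bound_def}, which turns any exact expression for $P_{e,cover}$ into the strict lower bound \eqref{equ_sphere_bound}. Everything therefore reduces to computing $\mathrm{Prob}(\mathbf{y} \in \mathcal{D}_c)$ for $\mathbf{y} = \mathbf{x} + \mathbf{z}$ with $\mathbf{z} \sim \mathcal{N}(0,\sigma^2 \mathbf{I}_n)$.

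First I would pin down the geometry of $\mathcal{D}_c$. A point $\mathbf{y}$ lies in $\mathcal{D}_c$ precisely when the line $\{\alpha \mathbf{y} : \alpha \in \mathbb{R}\}$ meets the covering sphere $\mathcal{S}_c$ centered at $\mathbf{x}$, and this happens iff the perpendicular distance from $\mathbf{x}$ to that line is at most $r_c$. Writing $\phi$ for the angle between $\mathbf{x}$ and $\mathbf{y}$, this distance is $\|\mathbf{x}\|\sin\phi$, so the condition is $\sin\phi \le r_c/\|\mathbf{x}\| = r_c/\sqrt{n P_{\mathbf{x}}}$. The hypothesis $r_c^2 < n P_{\mathbf{x}}$ guarantees that $\mathbf{0}$ lies strictly outside $\mathcal{S}_c$, so this describes a genuine double circular cone of half-angle $\theta$, $\sin\theta = r_c/\sqrt{n P_{\mathbf{x}}}$, with axis along $\pm\mathbf{x}$; the second nappe appears exactly because negative $\alpha$ is permitted.

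Next I would choose coordinates with $\mathbf{x} = (\sqrt{n P_{\mathbf{x}}}, 0, \dots, 0)$ and split the noise into its axial part $z = z_1$ and its transverse part $\mathbf{z}_\perp = (z_2,\dots,z_n) \in \mathbb{R}^{n-1}$. A short computation turns the angular condition $\sin\phi \le \sin\theta$ into $\|\mathbf{z}_\perp\| \le |\sqrt{n P_{\mathbf{x}}} + z|\,\tan\theta$, and using $\tan\theta = r_c/\sqrt{n P_{\mathbf{x}} - r_c^2}$ this threshold is exactly $f(z)$, with the absolute value encoding the case $\sqrt{n P_{\mathbf{x}}} + z < 0$ (the far nappe). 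Since $z$ and $\mathbf{z}_\perp$ are independent Gaussians, I would condition on $z$: the inner quantity $\mathrm{Prob}(\|\mathbf{z}_\perp\| > f(z))$ is precisely the probability that an $(n-1)$-dimensional isotropic Gaussian escapes a sphere of radius $f(z)$, i.e.\ the $h(z)$ in the statement. The expressions \eqref{equ_universal_even} and \eqref{equ_universal_odd}, read as the exact Gaussian-outside-sphere probabilities they are, give this in closed form in dimension $n-1$ with $t = f^2(z)/(2\sigma^2)$; the parity split is merely the parity of $n-1$, so odd $n$ yields even $n-1$ and the Erlang-type sum, while even $n$ yields odd $n-1$ and the $\erfc$ term plus the half-integer sum. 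Integrating $1 - h(z)$ against the $\mathcal{N}(0,\sigma^2)$ density of $z$ produces $\mathrm{Prob}(\mathbf{y} \in \mathcal{D}_c)$, and \eqref{equ_lower_bound_def} then delivers \eqref{equ_sphere_bound}.

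I expect the main obstacle to be the geometric reduction itself: recognizing that admitting all real $\alpha$ produces a double cone with apex at $\mathbf{0}$, and that the escape event factors through conditioning on the axial Gaussian into an independent $(n-1)$-dimensional transverse escape probability, so that the known closed forms apply with the dimension lowered by one. Confirming that the absolute value in $f$ and the summation index ranges in $h$ correctly match the two nappes and the two parity cases is the delicate bookkeeping to verify; the key conceptual point that makes $P_{e,cover}$ exact (and hence the inequality sharp except for the $\mathcal{V} \subset \mathcal{S}_c$ gap) is that \eqref{equ_universal_even}--\eqref{equ_universal_odd} are used here as exact identities for the transverse component rather than as bounds.
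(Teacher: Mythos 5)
Your proposal is correct and follows essentially the same route as the paper: reduce to the cone-shaped region $\mathcal{D}_c$, rotate so the cone axis is the $z_1$ axis, condition on the axial Gaussian component, evaluate the transverse $(n-1)$-dimensional Gaussian-in-sphere probability of radius $f(z_1)$ via the closed forms of \eqref{equ_universal_even}--\eqref{equ_universal_odd} with the parity roles swapped, and integrate over $z_1$ before invoking $\mathcal{D}(\mathbf{x}) \subset \mathcal{D}_c$. Your explicit derivation of $f(z)$ from the half-angle condition $\sin\phi \le r_c/\sqrt{nP_{\mathbf x}}$ and $\tan\theta = r_c/\sqrt{nP_{\mathbf x}-r_c^2}$ is a welcome elaboration of a step the paper only asserts, but it is not a different argument.
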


\begin{proof}
The probability that $\mathbf{y}$ falls into $\mathcal{D}_c$ in \eqref{equ_lower_bound_def} is:
\begin{align} \label{equ_lower_bound_noise}
    \rm{Prob}(\mathbf{y} \in \mathcal{D}_c)= \rm{Prob}(\mathbf{z} \in \mathcal{D}_c')= \int_{\mathcal{D}_c'} \it g_n(\mathbf{z})\,  d \mathbf{z}
\end{align}
where $\mathcal{D}_c'= \mathcal{D}_c- \mathbf{x}$. Due to the symmetry of the Gaussian noise and the covering sphere, it is equivalent to consider a rotated coordinate system where the $z_1$ axis corresponds to the line connecting the vertex of the decodable region and $\mathbf x$. In the rotated coordinate system, the vertex of the decodable region is $(-\sqrt{n P_{\mathbf x}},0,0,\ldots,0)$. 
(See Fig.~\ref{fig_possible_noise_region} for an example in 2 dimensions). 
The Euclidean distance $r_z$ between the boundaries of decodable region and $z_1$ axis 
is a function of $z_1$ as 
$r_z= f(z_1)= \left|\frac{r_c}{\sqrt{n P_{\mathbf x}- r_c^ 2}} z_1+ \sqrt{\frac{n P_{\mathbf x} r_c^ 2}{n P_{\mathbf x}- r_c^ 2}} \right|$. 

Due to the independence among dimensions of Gaussian noise, \eqref{equ_lower_bound_def} can be written as: 
\setcounter{equation}{7}
\begin{align} \label{equ_p_c_2}
    P_{e,cover}&= 1- \int_{-\infty}^{\infty} \frac{1}{\sqrt{2 \pi \sigma^ 2}} e^{- \frac{z_1^2}{2 \sigma^2}} P_s(r_z)  d z_1
\end{align}
with
\begin{align} \label{equ_p_c_2_Ps}
    P_s(r_z)= \int_{\mathcal{S}_{n-1}(r_z)} g_{n- 1}(z_2, z_3, \dots, z_n)\, d z_2 \dots d z_n
\end{align}
where $\mathcal{S}_{n-1}(r_z)$ is an $n-1$ dimensional sphere with radius $r_z$. The region $\mathcal{S}_{n-1}(r_z)$ can be considered as a truncated slice of a $n$ dimensional Gaussian which is the intersection of the decodable region $\mathcal{D}_c'$ and a hyperplane orthogonal the $z_1$ axis.
A method to solve the integral $P_s(r_z)$ is given in \cite{tarokh1999universal} and the result for $n-1$ dimensions case is modified as \eqref{equ_p_c_4},
where $t= f^2(z)/ (2 \sigma^2)$. Note that the integration in $P_s(r_z)$ is taken over $n-1$ dimensions, the even and odd are opposite of \eqref{equ_universal_even} and \eqref{equ_universal_odd}. By combining \eqref{equ_p_c_2} and \eqref{equ_p_c_4}, the probability of correct decoding $\rm{Prob}(\mathbf{z} \in \mathcal{D}_c')$ is obtained. Then the lower bound on the WER using the genie-aided exhaustive search decoder follows by $P_{e,Dec} > P_{e,cover}= 1- \rm{Prob}(\mathbf{z} \in \mathcal{D}_c')$.
\end{proof}

The bound is meaningful only when $r_c^2 < n P_{\mathbf x}$ is satisfied, or equivalently $P_{\mathbf x}> r_c^2/ n$. Because if existing $\mathbf{x}' \in \Lambda$ having $P_{x'} \leq r_c^2/ n$, such $\mathbf{x}'$ may not exist for all lattices, the $\mathcal{D}_c$ doesn't form a cone-like region and $P_{e, cover}= 0$. Also, since there is only single integral contained in \eqref{equ_sphere_bound}, the error bound can be computed numerically. 

\subsection{Estimate error rate using effective sphere} \label{sec_III_3}

Instead of a lower bound, an estimate on the achievable WER for the decoder is found by considering the effective sphere $\mathcal{S}_e$ centered at given $\mathbf{x}$ instead of the covering sphere. As introduced in Subsection~\ref{sec_II_A}, the effective sphere $\mathcal{S}_e$ is the sphere having same volume as its corresponding Voronoi region, i.e.\ $V(\mathcal{S}_e)= V_n$. We can use an effective sphere to approximate the Voronoi region of lattice codes. Note this only provides an estimate of the WER but not a bound, because neither $\mathcal{V} \subset \mathcal{S}_e$ nor $\mathcal{S}_e \subset \mathcal{V}$ is satisfied. Denote the decodable region considering effective sphere $\mathcal{S}_e$ as $\mathcal{D}_e= \{\mathbf{y} \mid \exists\alpha \in \mathbb{R},\ \alpha \mathbf{y} \in \mathcal{S}_e\}$ and corresponding probability of word error as $P_{e,effc}= 1- \rm{Prob}(\mathbf{y} \in \mathcal{D}_e)$. The estimated WER $P_{e,effc}$ is obtained by replacing $r_c$ in \eqref{equ_sphere_bound} using the radius $r_e$ of the effective sphere $\mathcal{S}_e$. For lattice codes having sphere-like Voronoi region, it implies that $\rm{Prob}(\mathbf{y} \in \mathcal{D}(\mathbf{x})) \approx \rm{Prob}(\mathbf{y} \in \mathcal{D}_e) < \rm{Prob}(\mathbf{y} \in \mathcal{D}_c)$, hence $P_{e,Dec} \approx P_{e,effc} > P_{e, cover}$. Later, we will give a numerical comparison between $P_{e,Dec}$ and $P_{e,effc}$ using E8 and BW16 lattice codes. We will show that the estimation is accurate for E8 and BW16 lattice codes. We hypothesize this because these lattice codes have sphere-like Voronoi regions. 

\begin{figure}[t]
    \centering
    \includegraphics[scale=0.18]{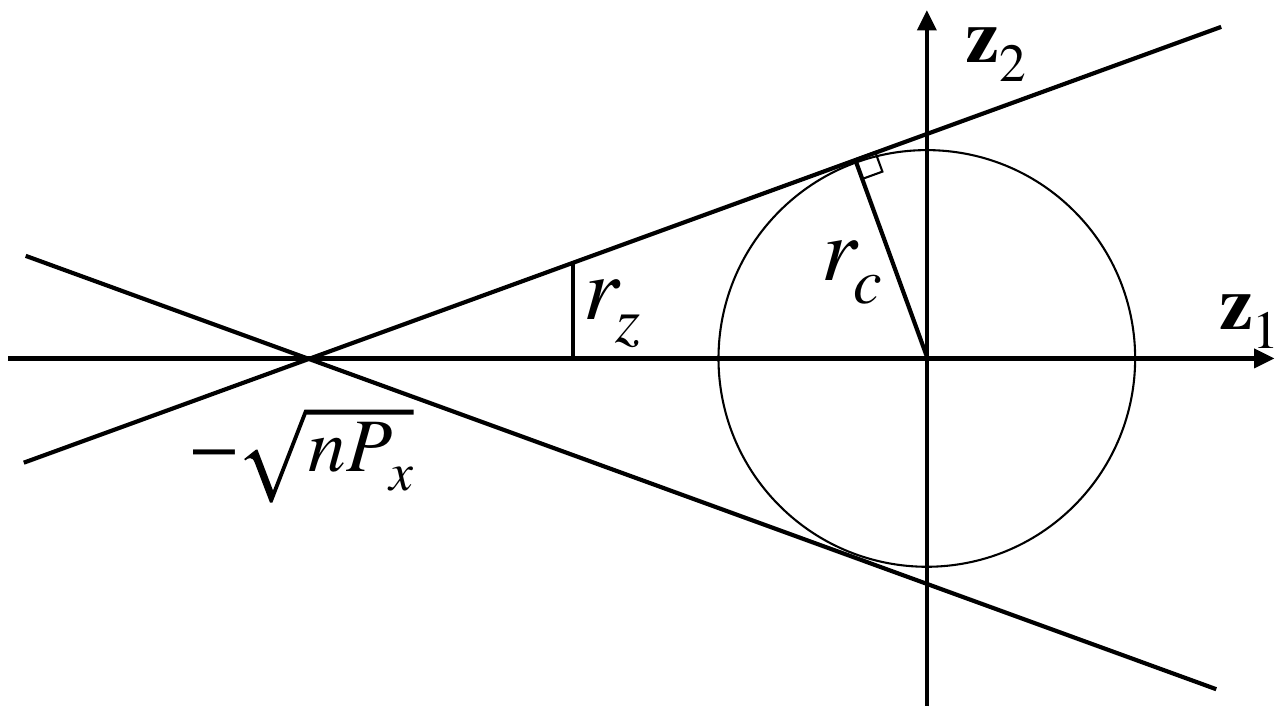}
    \caption{Example of rotated $\mathcal{D}_c'$ in 2 dimensions. The vertex of the decodable region is $(-\sqrt{n P_{\mathbf x}},0)$.}
    \label{fig_possible_noise_region}
\end{figure}

\begin{figure*}[t]
\normalsize
\setcounter{MYtempeqncnt}{\value{equation}}
\setcounter{equation}{9}
\begin{align}
P_s(r_z)&=
    \begin{cases}\label{equ_p_c_4}
    1- e^{- t}\left(1+ \frac{t}{1!}+ \frac{t^ 2}{2!}+ \dots + \frac{t^ {(n- 3)/2}}{((n- 3)/2)!}\right), & n\ \rm{is\ odd.}\\
    1- \erfc(t^ {1/2})- e^{- t}\left(\frac{t^{1/ 2}}{1/2!}+ \frac{t^{3/ 2}}{(3/ 2)!}+ \dots + \frac{t^ {(n- 3)/2}}{((n- 3)/2)!}\right), & n\ \rm{is\ even.}\\
    \end{cases}
\end{align}
\hrulefill
\vspace*{4pt}
\end{figure*}

\subsection{Asymptotic analysis}

An asymptotic error bound is given for $P \rightarrow \infty$. 
For lattice codes $\mathcal C$, this is equivalent to code rate $R \to \infty$. For $P \rightarrow \infty$, the decodable region approximately becomes an $n$-dimensional cylinder with radius $r$, where $r$ is $r_c$ or $r_e$. Similar to \eqref{equ_p_c_2}, the asymptotic probability of word error $P_{e,asym}$ is:
\begin{equation} \label{equ_asym_1}
    P_{e,asym}= 1- \int_{-\infty}^{\infty} \frac{1}{\sqrt{2 \pi \sigma^ 2}} e^{- \frac{z_1^2}{2 \sigma^2}} P_s(r)  d z_1
\end{equation}
where $P_s(r)$ is given by \eqref{equ_p_c_2_Ps} with radius $r$, and is independent of $z_1$. So we are able to separate $P_s(r)$ from the integral over $z_1$. Then  $P_{e,asym}= 1- P_s(r)$ and given $P \rightarrow \infty$ it can be found in closed form \eqref{equ_p_c_4} with a constant radius $r$. In order to obtain this result, the order of the limit and the integral are exchanged; although formal justification is required, it is correct intuitively because $P_s(r)$ and $z_1$ are increasingly independent in large $P$.
Fig.~\ref{fig_asymp_P_infinite} shows the error bound alone with its asymptotic value for dimension 8 and 16 with radius $r$ being 5.4512 and 6.5552 respectively. The $r$ are chosen such that the error bound computed by \eqref{equ_sphere_bound} is around $10^{-4}$.


\begin{figure}[t]
    \centering
    \includegraphics[scale=0.44]{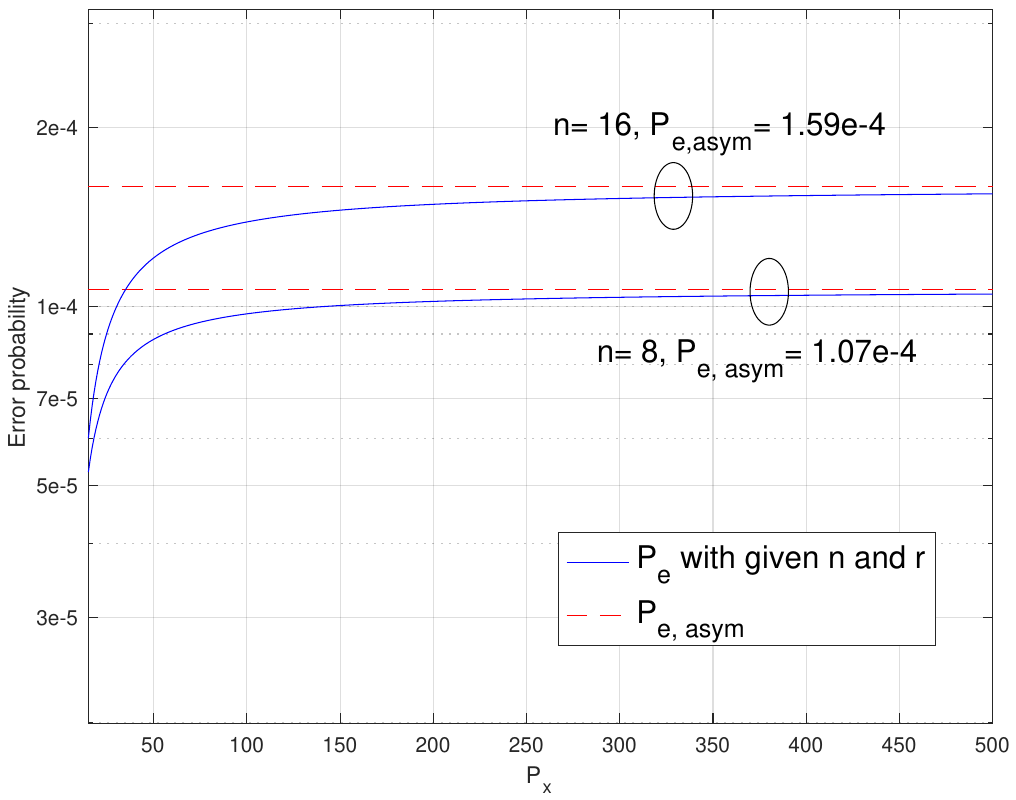}
    \caption{Asymptotic result of the error bound when $P_{\mathbf x} \rightarrow \infty$ given dimension $n= 8,\ 16$ and $r= 5.4512,\ 6.5552$ respectively. The noise variance $\sigma^2= 1$.}
    \label{fig_asymp_P_infinite}
\end{figure}


\section{Numerical Evaluation}

In this section we give a numerical evaluation of the error bound, the WER of the genie-aided exhaustive search decoder and conventional one-shot decoders over power-constrained AWGN channel. Evaluation of the following 5 cases are given in Fig.~\ref{fig_comparsion_bound}: (a) WER lower bounded using $P_{e,cover}$; (b) WER estimation using $P_{e,effc}$; (c) WER of the genie-aided exhaustive search decoder; (d) WER of the one-shot decoder with scaling factor $\alpha=1$; (e) WER of the one-shot decoder with the MMSE scaling factor $\alpha_{MMSE}$. For case (c), the $\alpha$ search range is $(0.5, 1.5)$ with step 0.01. Because E8 and BW16 lattices have known covering radius \cite[Ch.~4, p.~121, 130]{conway1993sphere}, E8 and BW16 lattice codes are considered. Hypercube shaping is applied as shaping scheme and the average message power is approximated using second moment of the shaping lattice. 
The dotted lines indicate $P_{e,cover}$, the lower bound on the WER by considering the covering sphere. The lower bound is valid for any lattice codes with given $r_c$ due to the property $\mathcal{V} \subset \mathcal{S}_c$. 
Compared with the one-shot decoder using $\alpha_{MMSE}$, the genie-aided exhaustive search decoder obtained 0.5 dB gain for E8 lattice code and 0.4 dB gain for BW16 lattice code at WER of $10^{-4}$. The estimate of WER, $P_{e,effc}$ from Subsection~\ref{sec_III_3}, shows an accurate estimate of E8 and BW16 lattice codes using the genie-aided exhaustive search decoder, which matches the hypothesis we gave in that subsection.

\begin{figure}[t]
    \centering
    \includegraphics[scale=0.44]{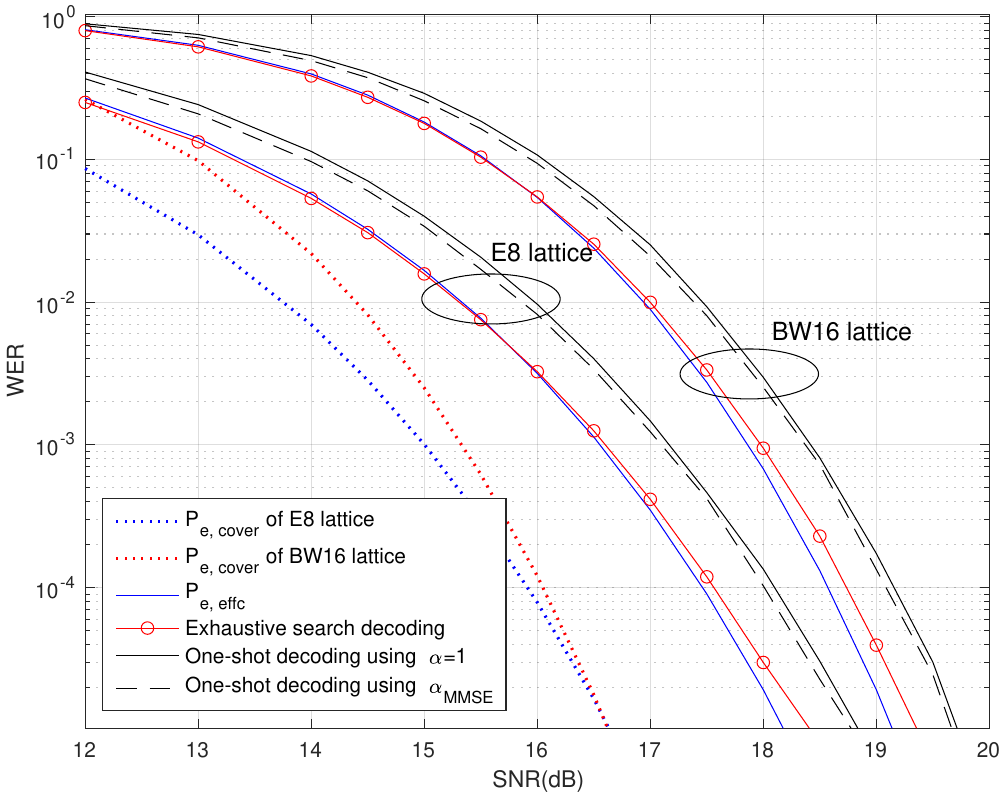}
    \caption{Numerical evaluation using E8 and BW16 lattice over power-constrained AWGN channel with the code rate being 2 and 2.25 respectively.}
    \label{fig_comparsion_bound}
\end{figure}

As a preliminary result, the genie is implemented by embedding CRC bits into lattice integer vectors. We give a simplified decoder targeted at practical implementation. Instead of an exhaustive search, the decoder selects three $\alpha$ candidates. Fig.~\ref{fig_polar_sim} shows a numerical result using $n= 128$ polar code lattice with soft cancellation (SC) decoding over the power-constrained AWGN channel. The lattice design follows \cite{ludwiniananda2021design}. With hypercube shaping, the code has 223 bits in total, 219 information bits and 4 bits of CRC parity. The resulting code rate is 1.74. Even though there is 0.078 dB SNR penalty due to the CRC bits, the lattice decoder can achieve lower WER, about 0.1 dB gain, than one-shot decoding if the decoder is allowed to re-try decoding. We hypothesize that this 0.1 dB gain can be further increased if more than 3 retries are allowed.

\begin{figure}[t]
    \centering
    \includegraphics[scale=0.44]{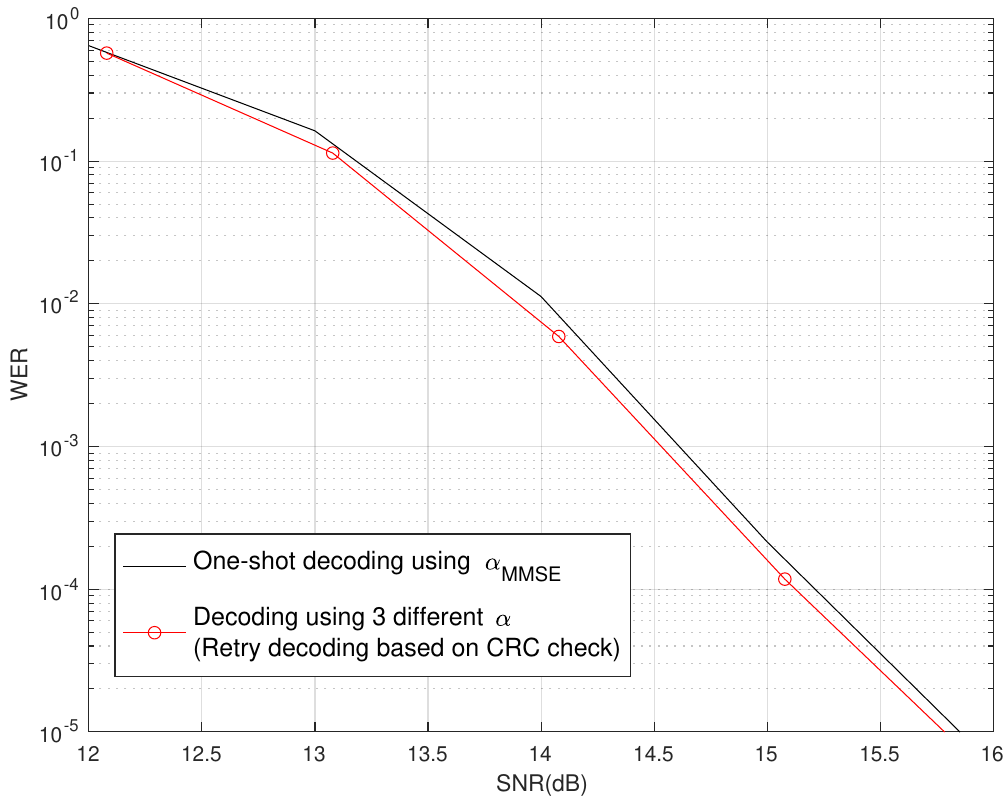}
    \caption{WER for $n=128$ polar lattice code with SC decoding over power-constrained AWGN channel, where the genie is implemented by a CRC.}
    \label{fig_polar_sim}
\end{figure}

\section{Conclusion}

In this paper, a genie-aided exhaustive search decoder was described that can decode message by searching through all $\alpha \in \mathbb{R}$. The lower bound and estimate on the WER for this decoder was given by considering the covering sphere and the effective sphere, respectively. The estimate was shown to be accurate for E8 and BW16 lattice codes. Compared with conventional one-shot decoding using $\alpha_{MMSE}$, the genie-aided decoder can obtain 0.5 dB for E8 lattice code and 0.4 dB for BW16 lattice code at WER of $10^{-4}$. 
A CRC-embedded lattice code design was considered to implement the genie. Instead of performing an exhaustive search, the decoder may select a finite number of reliable $\alpha$'s to achieve lower WER while keeping reasonable complexity. Preliminary results show that the decoder can achieve lower WER even after the SNR penalty for the CRC is included. Optimization of the embedded CRC and the strategy for selecting $\alpha$ in re-try decoding is considered to achieve lower WER at decoder.

	\bibliographystyle{ieeetr}
	\bibliography{Reference}

\end{document}